\providecommand{\U}[1]{\protect\rule{.1in}{.1in}}
\newtheorem{theorem}{Theorem}
\newtheorem{remark}[theorem]{Remark}
\newenvironment{proof}[1][Proof]{\noindent\textbf{#1.} }{\ \rule{0.5em}{0.5em}}
\begin{document}

\title{\textbf{Blowup for the }$C^{1}$\textbf{ Solutions of the Euler-Poisson
Equations of Gaseous Stars in }$R^{N}$}
\author{M\textsc{anwai Yuen\thanks{E-mail address: nevetsyuen@hotmail.com }}\\\textit{Department of Applied Mathematics,}\\\textit{The Hong Kong Polytechnic University,}\\\textit{Hung Hom, Kowloon, Hong Kong}}
\date{Revised 21-Jan-2010}
\maketitle

\begin{abstract}
The Newtonian Euler-Poisson equations with attractive forces are the classical
models for the evolution of gaseous stars and galaxies in astrophysics. In
this paper, we use the integration method to study the blowup problem of the
$N$-dimensional system with adiabatic exponent $\gamma>1$, in radial symmetry.
We could show that the $C^{1}$ non-trivial classical solutions $(\rho,V)$,
with compact support in $[0,R]$, where $R>0$ is a positive constant with
$\rho(t,r)=0$ and $V(t,r)=0$ for $r\geq R$, under the initial condition
\begin{equation}
H_{0}=\int_{0}^{R}r^{n}V_{0}dr>\sqrt{\frac{2R^{2n-N+4}M}{n(n+1)(n-N+2)}}%
\end{equation}
with an arbitrary constant $n>\max(N-2,0),$\newline blow up before a finite
time $T$ for pressureless fluids or $\gamma>1.$ Our results could fill some
gaps about the blowup phenomena to the classical $C^{1}$ solutions of that
attractive system with pressure under the first boundary condition.\newline In
addition, the corresponding result for the repulsive systems is also provided.
Here our result fully covers the previous case for $n=1$ in "M.W. Yuen,
\textit{Blowup for the Euler and Euler-Poisson Equations with Repulsive
Forces}, Nonlinear Analysis Series A: Theory, Methods \& Applications 74
(2011), 1465--1470".

2010 Mathematics Subject Classification: 35B30, 35B44, 35Q35, 35Q85, 85A05

Key Words: Euler-Poisson Equations, Integration Method, Blowup, Repulsive
Forces, With Pressure, $C^{1}$ Solutions, No-Slip Boundary Condition, Compact
Support, Initial Value Problem, First Boundary Condition

\end{abstract}

\section{Introduction}

The compressible isentropic Euler $(\delta=0)$ or Euler-Poisson $(\delta
=\pm1)$ equations can be written in the following form:
\begin{equation}
\left\{
\begin{array}
[c]{rl}%
{\normalsize \rho}_{t}{\normalsize +\nabla\cdot(\rho u)} & {\normalsize =}%
{\normalsize 0}\\
\rho\lbrack u_{t}+(u\cdot\nabla)u]{\normalsize +\nabla}P & {\normalsize =}%
{\normalsize \rho\nabla\Phi}\\
{\normalsize \Delta\Phi(t,x)} & {\normalsize =\delta\alpha(N)}%
{\normalsize \rho}%
\end{array}
\right.  \label{Euler-Poisson}%
\end{equation}
where $\alpha(N)$ is a constant related to the unit ball in $R^{N}$: :
$\alpha(1)=1$; $\alpha(2)=2\pi$ and for $N\geq3,$%
\begin{equation}
\alpha(N)=N(N-2)Vol(N)=N(N-2)\frac{\pi^{N/2}}{\Gamma(N/2+1)},
\end{equation}
where $Vol(N)$ is the volume of the unit ball in $R^{N}$ and $\Gamma$ is a
Gamma function.. As usual, $\rho=\rho(t,x)\geq0$ and $u=u(t,x)\in
\mathbf{R}^{N}$ are the density and the velocity respectively. $P=P(\rho)$\ is
the pressure function. The $\gamma$-law for the pressure term $P(\rho)$ could
be applied:%
\begin{equation}
{\normalsize P}\left(  \rho\right)  {\normalsize =K\rho}^{\gamma}\label{gamma}%
\end{equation}
which the constant $\gamma\geq1$. If\ ${\normalsize K>0}$, we call the system
with pressure; if ${\normalsize K=0}$, we call it pressureless.\newline When
$\delta=-1$, the system is self-attractive. The equations (\ref{Euler-Poisson}%
) are the Newtonian descriptions of gaseous stars or a galaxy in astrophysics
\cite{BT} and \cite{C}. When $\delta=1$, the system is the compressible
Euler-Poisson equations with repulsive forces. It can be used as a
semiconductor model \cite{Cse}. For the compressible Euler equation with
$\delta=0$, it is a standard model in fluid mechanics \cite{Lions}. And the
Poisson equation (\ref{Euler-Poisson})$_{3}$ could be solved by%
\begin{equation}
{\normalsize \Phi(t,x)=\delta}\int_{R^{N}}G(x-y)\rho(t,y){\normalsize dy}%
\end{equation}
where $G$ is Green's function:
\begin{equation}
G(x)\doteq\left\{
\begin{array}
[c]{ll}%
|x|, & N=1\\
\log|x|, & N=2\\
\frac{-1}{|x|^{N-2}}, & N\geq3.
\end{array}
\right.
\end{equation}

Here, the solutions in radial symmetry could be:
\begin{equation}
\rho=\rho(t,r)\text{ and }u=\frac{x}{r}V(t,r)=:\frac{x}{r}V
\end{equation}
with the radius $r=\left(  \sum_{i=1}^{N}x_{i}^{2}\right)  ^{1/2}$.

The Poisson equation (\ref{Euler-Poisson})$_{3}$ becomes%
\begin{equation}
{\normalsize r^{N-1}\Phi}_{rr}\left(  {\normalsize t,x}\right)  +\left(
N-1\right)  r^{N-2}\Phi_{r}{\normalsize =}\alpha\left(  N\right)
\delta{\normalsize \rho r^{N-1}}%
\end{equation}%
\begin{equation}
\Phi_{r}=\frac{\alpha\left(  N\right)  \delta}{r^{N-1}}\int_{0}^{r}%
\rho(t,s)s^{N-1}ds.
\end{equation}
By standard computation, the systems in radial symmetry can be rewritten in
the following form:%
\begin{equation}
\left\{
\begin{array}
[c]{c}%
\rho_{t}+V\rho_{r}+\rho V_{r}+\dfrac{N-1}{r}\rho V=0\\
\rho\left(  V_{t}+VV_{r}\right)  +P_{r}(\rho)=\rho\Phi_{r}\left(  \rho\right)
.
\end{array}
\right.  \label{eq12345}%
\end{equation}

In literature for constructing analytical solutions for these systems,
interested readers could refer to \cite{GW}, \cite{M1}, \cite{DXY}, \cite{Y1}
and \cite{Y4}. The local existence for the systems can be found in
\cite{Lions}, \cite{M2}, \cite{B} and \cite{G}. The analysis of stabilities
for the systems may be referred to \cite{E}, \cite{MUK}, \cite{MP}, \cite{P},
\cite{DLY}, \cite{DXY}, \cite{Y2}, \cite{CT}, \cite{CB} and \cite{YuenNA}.

In literature for showing blowup results for the solutions of these systems,
Makino, Ukai and Kawashima firstly defined the tame solutions \cite{MUK} for
outside the compact of the solutions%
\begin{equation}
V_{t}+VV_{r}=0.
\end{equation}
After this, Makino and Perthame continued the blowup studies of the "tame"
solutions for the Euler system with gravitational forces \cite{MP}. Then
Perthame proved the blowup results for $3$-dimensional pressureless system
with repulsive forces \cite{P} $(\delta=1)$. In fact, all these results rely
on the solutions with radial symmetry:
\begin{equation}
V_{t}+VV_{r}{\normalsize =}\frac{\alpha(N)\delta}{r^{N-1}}\int_{0}^{r}%
\rho(t,s)s^{N-1}ds.
\end{equation}
And the Emden ordinary differential equations were deduced on the boundary
point of the solutions with compact support:%
\begin{equation}
\frac{D^{2}R}{Dt^{2}}=\frac{\delta M}{R^{N-1}},\text{ }R(0,R_{0})=R_{0}%
\geq0,\text{ }\dot{R}(0,R_{0})=0
\end{equation}
where $\frac{dR}{dt}:=V$ and $M$ is the mass of the solutions, along the
characteristic curve. They showed the blowup results for the $C^{1}$ solutions
of the system (\ref{eq12345}).

In 2008 and 2009, Chae, Tadmor and Cheng in \cite{CT} and \cite{CB} showed the
finite time blowup, for the pressureless Euler-Poisson equations with
attractive forces $(\delta=-1)$, under the initial condition,%
\begin{equation}
S:=\{\left.  a\in R^{N}\right\vert \text{ }\rho_{0}(a)>0,\text{ }\Omega
_{0}(a)=0,\text{ }\nabla\cdot u(0,x(0)<0\}\neq\phi\label{chea}%
\end{equation}
where $\Omega$ is the rescaled vorticity matrix $(\Omega_{_{0}ij})=\frac{1}%
{2}(\partial_{i}u_{0}^{j}-\partial_{j}u_{0}^{i})$ with the notation
$u=(u^{1},u^{2},....,u^{N})$ in their paper and some point $x_{0}$.

They used the analysis of spectral dynamics to show the Riccati differential
inequality,%
\begin{equation}
\frac{D\operatorname{div}u}{Dt}\leq-\frac{1}{N}(\operatorname{div}u)^{2}.
\label{ineq1}%
\end{equation}
The $C^{2}$ solution for the inequality (\ref{ineq1}) blows up on or before
$T=-N/(\nabla\cdot u(0,x_{0}(0))$. But, their method cannot be applied to the
system with the system with pressure or the classical $C^{1}$ solutions for
the system without pressure.

We are in particular interested in the instabilities of the Euler-Poisson
equations in $R^{3}$. For the instabilities of the global classical solutions
for $\gamma>4/3$ is shown by the second inertia functional
\begin{equation}
H(t)=\int_{\Omega}\rho(t,x)\left\vert x\right\vert ^{2}dx
\end{equation}
with the constant total energy
\begin{equation}
E_{0}=E(t)=\int_{\Omega}\left(  \frac{1}{2}\rho\left\vert u\right\vert
^{2}+\frac{K}{\gamma-1}\rho^{\gamma}+\frac{1}{2}\rho\Phi\right)  dx
\end{equation}
\cite{MP} and \cite{DLY}. For the critical case $\gamma=4/3$, any small
perturbation which makes the energy positive would cause that the solutions go
to positive infinity, which implies an instability of such a state. And the
instability for the stationary solution for $\gamma=6/5$ is obtained \cite{J}.
But the nonlinear stability result for the other cases is unknown. It was
shown that any stationary solution is stable for $\gamma\in(4/3,2)$ and
unstable for $\gamma\in(1,4/3)$ in linearized case \cite{BT} and \cite{L}.

On the other hand, Yuen \cite{YuenNA} in 2010 used the integration method to
show that with the initial velocity
\begin{equation}
H_{0}=\int_{0}^{R}rV_{0}dr>0\label{condition1}%
\end{equation}
the solutions with compact support to the Euler $(\delta=0)$ or Euler-Poisson
equations with repulsive forces $(\delta=1)$ blow up before the finite time
$T$.

In this article, we could further apply the integration method in
\cite{YuenNA} for the system with attractive forces $(\delta=-1)$ to fill some
gaps about the blowup phenomena for the classical $C^{1}$ solutions of the
system with the pressure $\gamma>1$ or pressureless fluids:

\begin{theorem}
\label{thm:1 copy(1)}Consider the Euler-Poisson equations with attractive
forces $(\delta=-1)$ (\ref{eq12345}) in $R^{N}$. The non-trivial $C^{1}$
classical solutions $\left(  \rho,V\right)  $, in radial symmetry, with
compact support in $\left[  0,R\right]  $, where $R>0$ is a positive constant
($\rho(t,r)=0$ and $V(t,r)=0$ for $r\geq R$) and the initial velocity:
\begin{equation}
H_{0}=\int_{0}^{R}r^{n}V_{0}dr>\sqrt{\frac{2R^{2n-N+4}M}{n(n+1)(n-N+2)}}%
\end{equation}
with an arbitrary constant $n>\max(N-2,0)$ and $M$ is the total mass of the
fluids,\newline blow up before a finite time $T$ for pressureless fluids
$(K=0)$ or $\gamma>1$.
\end{theorem}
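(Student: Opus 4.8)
The plan is to build a functional $H(t) = \int_0^R r^n V(t,r)\,dr$ and derive a differential inequality forcing $H$ to leave the range allowed by classical regularity in finite time. First I would differentiate $H(t)$ in $t$, substitute the momentum equation $\rho(V_t + VV_r) = -P_r(\rho) + \rho\Phi_r$ written as $V_t + VV_r = -\frac{P_r}{\rho} - \frac{\alpha(N)}{r^{N-1}}\int_0^r \rho s^{N-1}\,ds$ (using $\delta=-1$), and integrate by parts against the weight $r^n$. The term $\int_0^R r^n VV_r\,dr = -\frac12\int_0^R n r^{n-1}V^2\,dr$ after integration by parts (the boundary term vanishes since $V(t,R)=0$). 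The pressure term $-\int_0^R \frac{r^n}{\rho}P_r\,dr$: since $P = K\rho^\gamma$, we have $P_r/\rho = \frac{K\gamma}{\gamma-1}(\rho^{\gamma-1})_r$, so integration by parts gives $\frac{K\gamma}{\gamma-1}\int_0^R n r^{n-1}\rho^{\gamma-1}\,dr \geq 0$ (boundary terms vanish by compact support; this is where $\gamma>1$ is used so $\rho^{\gamma-1}\to 0$ at the support edge for a $C^1$ solution). The crucial gravitational term is $-\int_0^R \frac{r^n}{r^{N-1}}\alpha(N)\left(\int_0^r \rho s^{N-1}\,ds\right)dr = -\alpha(N)\int_0^R r^{n-N+1}m(r)\,dr$ where $m(r) = \int_0^r \rho s^{N-1}\,ds$; since $r^{n-N+1} \geq 0$ (using $n > N-2$) and $m(r)\geq 0$, this term is $\leq 0$, so it helps push $H$ in the right direction. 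I would bound it below crudely by $-\alpha(N)\frac{R^{n-N+2}}{n-N+2}m(R)$, with $m(R)$ proportional to the mass $M$.

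The second step is to obtain a differential inequality for $H'(t)$. Collecting the above, $H'(t) \geq \frac{n}{2}\int_0^R r^{n-1}V^2\,dr - C$ where $C$ captures the (constant-in-time, since mass is conserved) gravitational lower bound; the pressure term only adds a nonnegative quantity, which is why the argument covers both $K=0$ and $\gamma>1$ uniformly. Next I apply Cauchy–Schwarz to relate $\int_0^R r^{n-1}V^2\,dr$ back to $H$: writing $H = \int_0^R r^n V\,dr = \int_0^R r^{(n+1)/2}\cdot r^{(n-1)/2}V\,dr$, we get $H^2 \leq \left(\int_0^R r^{n+1}\,dr\right)\left(\int_0^R r^{n-1}V^2\,dr\right) = \frac{R^{n+2}}{n+2}\int_0^R r^{n-1}V^2\,dr$. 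Hence $\int_0^R r^{n-1}V^2\,dr \geq \frac{(n+2)H^2}{R^{n+2}}$, giving $H'(t) \geq \frac{n(n+2)}{2R^{n+2}}H(t)^2 - C$. Choosing the constants so that the threshold $\sqrt{\tfrac{2R^{2n-N+4}M}{n(n+1)(n-N+2)}}$ is exactly the value at which the right-hand side is positive when $H_0$ exceeds it, the inequality becomes $H'(t) \geq \varepsilon H(t)^2$ for some $\varepsilon>0$ as long as $H$ stays above the threshold; since $H_0$ starts above it and $H$ is then increasing, this persists, and the Riccati comparison $H(t) \geq \frac{1}{H_0^{-1} - \varepsilon t}$ forces $H(t) \to \infty$ before $T = 1/(\varepsilon H_0)$.

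The final step is to argue that blowup of $H$ contradicts global $C^1$ regularity: for a $C^1$ solution with compact support in $[0,R]$, $|V(t,r)| \leq \sup_{[0,R]}|V(t,\cdot)|$ is finite, so $|H(t)| \leq \frac{R^{n+1}}{n+1}\sup|V(t,\cdot)| < \infty$ at every time the solution exists; therefore $H$ cannot blow up while the solution remains $C^1$, so the solution must lose $C^1$ regularity at or before $T$.

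The main obstacle I anticipate is handling the pressure and gravitational boundary terms rigorously for merely $C^1$ (not $C^2$) solutions: one must check that $\rho^{\gamma-1}$ and $r^n V^2$ and the mass integrand genuinely vanish at $r=R$ in the appropriate sense so that all integrations by parts are valid, and that the weight exponents $n-1$, $n-N+1$, $n+1$ are all nonnegative — which is precisely the role of the hypothesis $n > \max(N-2,0)$. The sign of the gravitational term (it must not fight the argument) relies on $\delta = -1$ making $-\alpha(N)\int_0^r\rho s^{N-1}ds \leq 0$ appear with the favorable sign in $H'$; I would double-check this sign bookkeeping carefully, since for the repulsive case $\delta=+1$ the analogous term would instead require absorbing it, explaining why the repulsive statement (mentioned in the abstract) needs a separate treatment.
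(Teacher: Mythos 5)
Your proposal is correct and follows essentially the same integration method as the paper: the same functional $H(t)=\int_0^R r^nV\,dr$, the same integrations by parts with vanishing boundary terms, the same crude bound on the gravitational term by $\frac{R^{n-N+2}M}{n-N+2}$, and the same Riccati-type conclusion. The only (harmless) deviations are that your direct Cauchy--Schwarz with weight $r^{(n+1)/2}$ yields the slightly sharper constant $n(n+2)$ in place of the paper's $n(n+1)$ (so the stated threshold still suffices), and that you make explicit the final step that finiteness of $H$ for $C^1$ solutions forces loss of regularity, which the paper leaves implicit.
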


Here, the boundary condition for the fluids:%
\begin{equation}
\rho(t,r)=0\text{ and }V(t,r)=0\text{ for }r\geq R
\end{equation}
is called non-slip boundary condition \cite{Day} and \cite{CTC}.

\section{Integration Method}

We just modify the integration method which was initially designed for the
Euler-Poisson system with repulsive force $(\delta=1)$ in \cite{YuenNA} to
obtain the corresponding blow results for the ones with attractive forces
$(\delta=-1)$.

\begin{proof}
We only use the density function $\rho(t,x(t;x))$ to preserve its non-negative
nature as we take integration for the mass equation (\ref{Euler-Poisson}%
)$_{1}$:%
\begin{equation}
\frac{D\rho}{Dt}+\rho\nabla\cdot u=0 \label{eqq2}%
\end{equation}
with the material derivative,%
\begin{equation}
\frac{D}{Dt}=\frac{\partial}{\partial t}+\left(  u\cdot\nabla\right)
\label{eqq1}%
\end{equation}
to have:%
\begin{equation}
\rho(t,x)=\rho_{0}(x_{0}(0,x_{0}))\exp\left(  -\int_{0}^{t}\nabla\cdot
u(t,x(t;0,x_{0}))dt\right)  \geq0
\end{equation}
for $\rho_{0}(x_{0}(0,x_{0}))\geq0,$ along the characteristic curve.

Then we could control the momentum equation (\ref{eq12345})$_{2}$ for the
non-trivial solutions in radial symmetry, $\rho_{0}\neq0$, to obtain:%
\begin{equation}
V_{t}+VV_{r}+K\gamma\rho^{\gamma-2}\rho_{r}=\Phi_{r}%
\end{equation}%
\begin{equation}
V_{t}+\frac{\partial}{\partial r}(\frac{1}{2}V^{2})+K\gamma\rho^{\gamma-2}%
\rho_{r}=\Phi_{r}\label{eq478}%
\end{equation}%
\begin{equation}
r^{n}V_{t}+r^{n}\frac{\partial}{\partial r}(\frac{1}{2}V^{2})+K\gamma
r^{n}\rho^{\gamma-2}\rho_{r}=r^{n}\Phi_{r}\label{eq789}%
\end{equation}
with multiplying the function $r^{n}$ with $n>0$, on the both sides.\newline
We could take the integration with respect to $r,$ to equation (\ref{eq789}),
for $\gamma>1$ or $K\geq0$:%
\begin{equation}
\int_{0}^{R}r^{n}V_{t}dr+\int_{0}^{R}r^{n}\frac{d}{dr}(\frac{1}{2}V^{2}%
)+\int_{0}^{R}K\gamma r^{n}\rho^{\gamma-2}\rho_{r}dr=-\int_{0}^{R}r^{n}%
\Phi_{r}dr
\end{equation}%
\begin{equation}
\int_{0}^{R}r^{n}V_{t}dr+\int_{0}^{R}r^{n}\frac{d}{dr}(\frac{1}{2}V^{2}%
)+\int_{0}^{R}\frac{K\gamma r^{n}}{\gamma-1}d\rho^{\gamma-1}=-\int_{0}%
^{R}\left[  \frac{\alpha(N)r^{n}}{r^{N-1}}\int_{0}^{r}\rho(t,s)s^{N-1}%
ds\right]  dr\label{789}%
\end{equation}
with the estimation for the right hand side of (\ref{789}):
\begin{equation}
\int_{0}^{R}\left[  \frac{\alpha(N)r^{n}}{r^{N-1}}\int_{0}^{r}\rho
(t,s)s^{N-1}ds\right]  dr\leq\int_{0}^{R}\left[  \frac{\alpha(N)r^{n}}%
{r^{N-1}}\int_{0}^{R}\rho(t,s)s^{N-1}ds\right]  dr=\int_{0}^{R}r^{n-N+1}%
Mdr=\frac{R^{n-N+2}M}{n-N+2}%
\end{equation}
where $M$ is the total mass of the fluids and the constant $n>\max(N-2,0)$,%
\begin{equation}
-\int_{0}^{R}\left[  \frac{\alpha(N)r^{n}}{r^{N-1}}\int_{0}^{r}\rho
(t,s)s^{N-1}ds\right]  dr\geq-\frac{R^{n-N+2}M}{n-N+2}%
\end{equation}
to have%
\begin{equation}
\int_{0}^{R}r^{n}V_{t}dr+\int_{0}^{R}r^{n}\frac{d}{dr}(\frac{1}{2}V^{2}%
)+\int_{0}^{R}\frac{K\gamma r^{n}}{\gamma-1}d\rho^{\gamma-1}\geq
-\frac{R^{n-N+2}M}{n-N+2}.\label{eq567}%
\end{equation}
We notice that this is the critical step in this paper to obtain the
corresponding blowup results for the Euler system with attractive forces.

Then, the below equation could be showed by integration by part:%
\begin{equation}%
\begin{array}
[c]{c}%
\int_{0}^{R}r^{n}V_{t}dr-\frac{1}{2}\int_{0}^{R}nr^{n-1}V^{2}dr+\frac{1}%
{2}\left[  R^{n}V^{2}(t,R)-0^{n}\cdot V^{2}(t,0)\right] \\
-\int_{0}^{R}\frac{K\gamma nr^{n-1}}{\gamma-1}\rho^{\gamma-1}dr+\frac{K\gamma
}{\gamma-1}\left[  R^{n}\rho^{\gamma-1}(t,R)-0^{n}\cdot\rho^{\gamma
-1}(t,0)\right]  \geq-\frac{R^{n-N+2}M}{n-N+2}.
\end{array}
\end{equation}
The above inequality with the boundary condition with a uniform compact
support ($V(t,R)=0$ and $\rho(t,R)=0$), becomes%
\begin{equation}
\int_{0}^{R}r^{n}V_{t}dr-\frac{1}{2}\int_{0}^{R}nr^{n-1}V^{2}dr-\int_{0}%
^{R}\frac{K\gamma nr^{n-1}}{\gamma-1}\rho^{\gamma-1}dr\geq-\frac{R^{n-N+2}%
M}{n-N+2}%
\end{equation}%
\begin{equation}
\frac{d}{dt}\int_{0}^{R}r^{n}Vdr-\frac{1}{2}\int_{0}^{R}nr^{n-1}V^{2}%
dr-\int_{0}^{R}\frac{K\gamma nr^{n-1}}{\gamma-1}\rho^{\gamma-1}dr\geq
-\frac{R^{n-N+2}M}{n-N+2}%
\end{equation}%
\begin{equation}
\frac{d}{dt}\frac{1}{n+1}\int_{0}^{R}Vdr^{n+1}-\frac{1}{2}\int_{0}^{R}\frac
{n}{(n+1)r}V^{2}dr^{n+1}+\frac{R^{n-N+2}M}{n-N+2}\geq\int_{0}^{R}\frac{K\gamma
nr^{n-1}}{\gamma-1}\rho^{\gamma-1}dr\geq0
\end{equation}
for $n>0$ and $\gamma>1$ or $K=0.$\newline For non-trivial initial density
functions $\rho_{0}\geq0$, we have:%
\begin{equation}
\frac{d}{dt}\frac{1}{n+1}\int_{0}^{R}Vdr^{n+1}-\frac{1}{2}\int_{0}^{R}\frac
{n}{(n+1)r}V^{2}dr^{n+1}+\frac{R^{n-N+2}M}{n-N+2}\geq0
\end{equation}%
\begin{equation}
\frac{d}{dt}\frac{1}{n+1}\int_{0}^{R}Vdr^{n+1}+\frac{R^{n-N+2}M}{n-N+2}%
\geq\int_{0}^{R}\frac{n}{2(n+1)r}V^{2}dr^{n+1}\geq\frac{n}{2(n+1)R}\int
_{0}^{R}V^{2}dr^{n+1} \label{eq11111}%
\end{equation}%
\begin{equation}
\frac{d}{dt}\int_{0}^{R}Vdr^{n+1}+\frac{R^{n-N+2}M}{n-N+2}\geq\frac{n}{2R}%
\int_{0}^{R}V^{2}dr^{n+1}. \label{eq2222}%
\end{equation}
We could denote
\begin{equation}
H:=H(t)=\int_{0}^{R}r^{n}Vdr=\frac{1}{n+1}\int_{0}^{R}Vdr^{n+1}%
\end{equation}
and apply the Cauchy-Schwarz inequality:%
\begin{equation}
\left\vert \int_{0}^{R}V\cdot1dr^{n+1}\right\vert \leq\left(  \int_{0}%
^{R}V^{2}dr^{n+1}\right)  ^{1/2}\left(  \int_{0}^{R}1dr^{n+1}\right)  ^{1/2}%
\end{equation}%
\begin{equation}
\left\vert \int_{0}^{R}V\cdot1dr^{n+1}\right\vert \leq\left(  \int_{0}%
^{R}V^{2}dr^{n+1}\right)  ^{1/2}\left(  R^{n+1}\right)  ^{1/2}%
\end{equation}%
\begin{equation}
\frac{\left\vert \int_{0}^{R}Vdr^{n+1}\right\vert }{R^{\frac{n+1}{2}}}%
\leq\left(  \int_{0}^{R}V^{2}dr^{n+1}\right)  ^{1/2}%
\end{equation}%
\begin{equation}
\frac{(n+1)^{2}H^{2}}{R^{n+1}}\leq\int_{0}^{R}V^{2}dr^{n+1}%
\end{equation}%
\begin{equation}
\frac{n(n+1)^{2}H^{2}}{2R^{n+2}}\leq\frac{n}{2R}\int_{0}^{R}V^{2}dr^{n+1}
\label{ineq123}%
\end{equation}
for letting equation (\ref{eq2222}) to be%
\begin{equation}
\frac{d}{dt}(n+1)H+\frac{R^{n-N+2}M}{n-N+2}\geq\frac{n}{2R}\int_{0}^{R}%
V^{2}dr^{n+1}\geq\frac{n(n+1)^{2}H^{2}}{2R^{n+2}}%
\end{equation}
with inequality (\ref{ineq123}),%
\begin{equation}
\frac{d}{dt}H\geq\frac{n(n+1)H^{2}}{2R^{n+2}}-\frac{R^{n-N+2}M}{n-N+2}.
\label{Ricc}%
\end{equation}
If we require the initial condition%
\begin{equation}
H_{0}^{{}}>\sqrt{\frac{2R^{2n-N+4}M}{n(n+1)(n-N+2)}},
\end{equation}
it is well-known for that the solutions for the Riccati differential
inequality (\ref{Ricc}) blow up before a finite time $T$:%
\begin{equation}
\underset{t->T^{-}}{\lim}H(t)=+\infty.
\end{equation}
Therefore, we could show that the $C^{1}$ solutions blow up before a finite
time $T.$

This completes the proof.
\end{proof}

Additionally, it is clear to see that we could further apply the integration
method to extend Yuen's result \cite{Day} as the following theorem:

\begin{theorem}
\label{thm:1}Consider the Euler $(\delta=0)$ or Euler-Poisson equations with
repulsive forces $(\delta=1)$ (\ref{Euler-Poisson}) in $R^{N}$. The
non-trivial classical solutions $\left(  \rho,V\right)  $, in radial symmetry,
with compact support in $\left[  0,R\right]  $, where $R>0$ is a positive
constant ($\rho(t,r)=0$ and $V(t,r)=0$ for $r\geq R$) and the initial
velocity:
\begin{equation}
H_{0}=\int_{0}^{R}r^{n}V_{0}dr>0
\end{equation}
with an arbitrary constant $n>0,$\newline blow up on or before the finite time
$T=2R^{n+2}/(n(n+1)H_{0}),$ for pressureless fluids $(K=0)$ or $\gamma>1$.
\end{theorem}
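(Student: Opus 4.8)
The plan is to mirror the integration argument just carried out for the attractive case, specializing and simplifying it for $\delta=0$ or $\delta=1$. As before, the density remains non-negative along characteristics, so the solution stays genuinely compactly supported in $[0,R]$. Starting from the radial momentum equation, I would multiply by $r^{n}$ ($n>0$), rewrite the convective term as $r^{n}\partial_{r}(V^{2}/2)$ and the pressure term as $\tfrac{K\gamma}{\gamma-1}r^{n}\,d\rho^{\gamma-1}$, and integrate over $[0,R]$. The key sign difference from the attractive case is that now the right-hand side is $+\int_{0}^{R}r^{n}\Phi_{r}\,dr$ with $\Phi_{r}=\tfrac{\alpha(N)\delta}{r^{N-1}}\int_{0}^{r}\rho s^{N-1}\,ds\geq 0$ (for $\delta=1$; it is identically $0$ for $\delta=0$). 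Hence the Poisson term now \emph{helps} rather than hinders, and no upper bound on it is needed — one simply drops it, which is why the initial threshold collapses to $H_{0}>0$.

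Next I would integrate by parts: $\int_{0}^{R}r^{n}\partial_{r}(V^{2}/2)\,dr = \tfrac12 R^{n}V^{2}(t,R) - \tfrac{n}{2}\int_{0}^{R}r^{n-1}V^{2}\,dr$, and likewise for the pressure term, using the compact-support boundary conditions $V(t,R)=0$, $\rho(t,R)=0$ to kill the boundary contributions (and $n>0$ to kill the $r=0$ endpoint). Writing $H(t)=\int_{0}^{R}r^{n}V\,dr$, this yields
\begin{equation}
\frac{d}{dt}H(t)=\frac{n}{2}\int_{0}^{R}r^{n-1}V^{2}\,dr+\int_{0}^{R}\frac{K\gamma n r^{n-1}}{\gamma-1}\rho^{\gamma-1}\,dr+\int_{0}^{R}r^{n}\Phi_{r}\,dr\geq\frac{n}{2}\int_{0}^{R}r^{n-1}V^{2}\,dr,
\end{equation}
where the last two integrals are non-negative for $\gamma>1$ or $K=0$, and for $\delta\in\{0,1\}$. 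Then I bound $\int_{0}^{R}r^{n-1}V^{2}\,dr\geq \tfrac{1}{R}\int_{0}^{R}V^{2}\,dr^{n+1}$ as in \eqref{eq11111}, and apply Cauchy--Schwarz exactly as in \eqref{ineq123} to get $\int_{0}^{R}V^{2}\,dr^{n+1}\geq (n+1)^{2}H^{2}/R^{n+1}$, producing the autonomous differential inequality
\begin{equation}
\frac{d}{dt}H(t)\geq\frac{n(n+1)H(t)^{2}}{2R^{n+2}}.
\end{equation}

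Finally, since $H(0)=H_{0}>0$, a direct comparison with the ODE $\dot{y}=\tfrac{n(n+1)}{2R^{n+2}}y^{2}$, $y(0)=H_{0}$, whose solution $y(t)=H_{0}/(1-\tfrac{n(n+1)H_{0}}{2R^{n+2}}t)$ escapes to $+\infty$ at $t=2R^{n+2}/(n(n+1)H_{0})$, shows $H(t)\to+\infty$ on or before $T=2R^{n+2}/(n(n+1)H_{0})$; hence the $C^{1}$ solution cannot exist past $T$. I do not anticipate a genuine obstacle here — the argument is strictly easier than Theorem \ref{thm:1 copy(1)} because the potential term has a favorable sign. The only points requiring a little care are the vanishing of the boundary terms at $r=0$ (handled by $n>0$) and confirming that the extra integrals are non-negative in every admissible case ($\gamma>1$ gives $\tfrac{K\gamma}{\gamma-1}>0$; $K=0$ kills the pressure term outright; $\delta\in\{0,1\}$ makes $\Phi_{r}\geq 0$).
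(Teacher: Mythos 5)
Your proposal is correct and follows essentially the same route as the paper's own proof: reverse the sign of the potential term (which is $\geq 0$ for $\delta\in\{0,1\}$, so it can simply be dropped), integrate by parts with the compact-support boundary conditions, and reduce via Cauchy--Schwarz to the Riccati inequality $\frac{d}{dt}H\geq\frac{n(n+1)}{2R^{n+2}}H^{2}$, which with $H_{0}>0$ forces blowup by $T=2R^{n+2}/(n(n+1)H_{0})$. The only blemish is the intermediate display where you write $\int_{0}^{R}r^{n-1}V^{2}\,dr\geq\frac{1}{R}\int_{0}^{R}V^{2}\,dr^{n+1}$, which is missing a factor $\frac{1}{n+1}$ (since $dr^{n+1}=(n+1)r^{n}dr$); with the correct constant, as in \eqref{eq11111}, you arrive at exactly the Riccati inequality you state, so the conclusion is unaffected.
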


\begin{proof}
We just reverse the estimation for the equation (\ref{789}) with repulsive
forces $(\delta=1)$ to have
\begin{equation}
\int_{0}^{R}r^{n}V_{t}dr+\int_{0}^{R}r^{n}\frac{d}{dr}(\frac{1}{2}V^{2}%
)+\int_{0}^{R}\frac{K\gamma r^{n}}{\gamma-1}d\rho^{\gamma-1}=\int_{0}%
^{R}\left[  \frac{\alpha(N)r^{n}}{r^{N-1}}\int_{0}^{r}\rho(t,s)s^{N-1}%
ds\right]  dr
\end{equation}%
\begin{equation}
\int_{0}^{R}r^{n}V_{t}dr+\int_{0}^{R}r^{n}\frac{d}{dr}(\frac{1}{2}V^{2}%
)+\int_{0}^{R}\frac{K\gamma r^{n}}{\gamma-1}d\rho^{\gamma-1}\geq0
\end{equation}
for $\delta\geq0.$\newline For non-trivial initial density functions $\rho
_{0}\geq0$, we may obtain the corresponding result:%
\begin{equation}
\frac{d}{dt}H\geq\frac{n(n+1)H^{2}}{2R^{n+2}}%
\end{equation}%
\begin{equation}
H\geq\frac{-2R^{n+2}H_{0}}{n(n+1)H_{0}t-2R^{n+2}}.
\end{equation}
Then, we could require the initial condition
\begin{equation}
H_{0}=\int_{0}^{R}r^{n}V_{0}dr>0
\end{equation}
for showing that the solutions blow up on or before the finite time
$T=2R^{n+2}/(n(n+1)H_{0}).$

This completes the proof.
\end{proof}

We notice that Theorem 2 in this paper fully cover the previous case for $n=1$
in \cite{YuenNA}. Further researches are needed to have the corresponding
results for the non-radial symmetric cases.

\begin{remark}
If the global $C^{1}$ solutions with compact support whose radii expand
unboundedly when time goes to infinity, the results in this paper could not
offer the information about this case.
\end{remark}

\begin{remark}
The blowup results in this paper imply that the classical Euler-Poisson
equations even for the ones with attractive forces could not be used to be a
good modelling for the evolutions of gaseous stars with the relativistically
large functional $H_{0}$ which compares with the total mass of the fluid.
Alternatively, the relativistic Euler-Poisson equations may be adopted for
these cases to prevent these blowup phenomena.
\end{remark}

On the other hand, the author conjectures that there exists a variational
version for showing blowup phenomena in non-radially symmetrical cases with
the first boundary condition. But, another nice functional form would be
required to be designed to handle the corresponding problems.

\end{document}